\def\BibTeX{{\rm B\kern-.05em{\sc i\kern-.025em b}\kern-.08emT\kern-.1667em\lower.7ex\hbox{E}\kern-.125emX}}
\newcommand\ie{i.\,e.}
\newcommand\eg{e.\,g.}
\newtheorem{theorem}{Theorem}[section]
\newtheorem{lemma}[theorem]{Lemma}
\newcommand\define{\ensuremath{\mathrel{\stackrel{\mathrm{def}}{=}}}}
\DeclareMathOperator*{\argmin}{arg\,min}
\renewcommand\footnotetextcopyrightpermission[1]{} 
\begin{document}
\fancyhead{}

\title{Personalized Purchase Prediction of Market Baskets with Wasserstein-Based Sequence Matching}

\author{Mathias Kraus}

\affiliation{%
	\institution{ETH Zurich}
	\streetaddress{Weinbergstr. 56/58}
	\city{8092 Zurich}
	\state{Switzerland}
}
\email{mathiaskraus@ethz.ch}

\author{Stefan Feuerriegel}
\affiliation{%
	\institution{ETH Zurich}
	\streetaddress{Weinbergstr. 56/58}
	\city{8092 Zurich}
	\state{Switzerland}
}
\email{sfeuerriegel@ethz.ch}

\renewcommand{\shortauthors}{Kraus and Feuerriegel}

\begin{abstract}
Personalization in marketing aims at improving the shopping experience of customers by tailoring services to individuals. In order to achieve this, businesses must be able to make personalized predictions regarding the next purchase. That is, one must forecast the \emph{exact} list of items that will comprise the next purchase, \ie, the so-called market basket. Despite its relevance to firm operations, this problem has received surprisingly little attention in prior research, largely due to its inherent complexity. In fact, state-of-the-art approaches are limited to intuitive decision rules for pattern extraction. However, the simplicity of the pre-coded rules impedes performance, since decision rules operate in an autoregressive fashion: the rules can only make inferences from past purchases of a \emph{single} customer without taking into account the knowledge transfer that takes place between customers. 

In contrast, our research overcomes the limitations of pre-set rules by contributing a novel predictor of market baskets from sequential purchase histories: our predictions are based on similarity matching in order to identify similar purchase habits among the complete shopping histories of \emph{all} customers. Our contributions are as follows: (1)~We propose similarity matching based on subsequential dynamic time warping (SDTW) as a novel predictor of market baskets. Thereby, we can effectively identify cross-customer patterns. (2)~We leverage the Wasserstein distance for measuring the similarity among embedded purchase histories. (3)~We develop a fast approximation algorithm for computing a lower bound of the Wasserstein distance in our setting. An extensive series of computational experiments demonstrates the effectiveness of our approach. The accuracy of identifying the exact market baskets based on state-of-the-art decision rules from the literature is outperformed by a factor of 4.0. 
\end{abstract}

\keywords{purchase prediction, market basket, dynamic time warping, sequence matching, Wasserstein distance, product embeddings}

\maketitle

\section{Introduction}

Understanding and predicting consumer decision-making has been subject to extensive research. The outcome of the decision-making process forms a multi-category shopping basket which comprises the complete set of items that an individual consumer has purchased together, \ie, the so-called \emph{market basket} \cite{Guidotti.2017}. Both online and offline merchants were traditionally interested in understanding the composition of customers' market baskets, since it enables them to gain valuable insights that can inform personalized marketing and targeted cross-selling programs. These efforts were strengthened in the recent wave of personalization in marketing and has fostered a variety of predictive applications. As a result, firms aim at predicting personalized market baskets from the next purchase of individuals, based on which they improve customer service, supply chain management, or assortment optimization \cite{Agrawal.2017}. Despite the prosperous outlook for firm operations, actual works on forecasting market baskets are scarce.  

The problem of predicting market baskets entails clear differences from other prediction tasks in marketing. To this end, forecasting techniques have been applied to \emph{sales} \cite{Choi.2011,Choi.2014,Winters.1960}, yet where, different from market basket predictions, purchases are aggregated across stores. Specifically, sales forecasting involves a multivariate time series as input, where the output is then a single value denoting the overall sales volume. Hence, this prediction task is considerably simpler than inferring the dynamic nature of subsets with variable size. Because of this difference, sales forecasting has been approached by feature-based classifiers and recurrent neural networks \cite{Kraus.2018,Loureiro.2018}; however, these methods are not applicable to the outcome variable in our research. When making predictions with regard to assortments, recent works have been concerned with \emph{item-level} predictions \cite{Jacobs.2016}, where the purchases for a single product are predicted and not a complete basket. \emph{User-level} predictions of purchases \cite{Lichman.2018} are limited to a single time series as input, as well as a univariate output, whereas market baskets must be modeled as a dynamic set. \emph{Ranking}, such as in (session-based) recommender systems, orders candidate items by the probability of purchase \cite{Hidasi.2015, Hidasi.2018}; however, this task is merely supposed to re-arrange a list of items \cite{Rendle.2009, Weston.2013}, thus failing to provide a subset. Altogether, it becomes evident that the specific nature of predicting market baskets is subject to unique characteristics.

\textbf{Problem statement:} The problem of predicting market baskets refers to identifying the exact set of items a customer will purchase in her next transaction \cite{Guidotti.2018}. Formally, we consider a retailer with $n$ available items for purchase, which are given by $\mathcal{I} = \{ i_1, \ldots ,i_n\}$. Furthermore, the retailer has already served a set of $k$ customers $C = \{c_1, \ldots, c_k\}$ for who purchase histories have been recorded. The objective is then to infer the next purchase for a given customer $c$. Additional information for this customer $c$ is available, namely, the ordered sequence of previously purchased market baskets, \ie, $B_c = [b_c^{1}, b_c^{2}, \ldots, b_c^{m_c}]$, where $b_c^{i} \subseteq \mathcal{I}$, $i = 1, \ldots, m_c$, represents the basket composition and $m_c$ is the number of past transactions. Analogously, the complete set of purchase histories across the whole customer base $1, \ldots, k$ is given by $\mathcal{B} = \{B_{1}, B_{2}, \ldots, B_{k} \}$. Note that the size of the market basket $\left|b_{c}^{i} \right|$ and the length of the purchase history are variable, and depend on both $i$ and $m_c$. Given the purchase history $B_c$ of customer $c$ with $m_c$ transactions, the market basket prediction returns the expected basket $b_c^\ast \subseteq I$. This should provide the \emph{exact} set $b_c^{m_c+1}$ of items that customer $c$ will purchase with the next transaction $m_c+1$. In this sense, not a single item is returned but it is supposed to be a subset of $I$. Therefore, only \emph{identical} matches between items of $b_c^\ast$ and $b_c^{m_c+1}$ count as correct predictions. 

The stated problem is naturally challenging: market baskets are highly variable and subject to few recurrent patterns due to repeated purchases or co-purchases. At the same time, the number of available items for purchase, $n$, will be substantially larger than the average size of a market basket, \ie, $\left|b_c^{i}\right| \ll n$, thus turning into an extreme subset selection problem. This problem thus forbids a straightforward application of recurrent neural networks or ranking approaches which is why prior research has struggled with this type of prediction. In general, predictions usually involve only a single-valued output, whereas, in market basket predictions, the output is given by a subset. Needless to say, predictions of subsets are generally rare. 

The common approach to predicting market baskets draws upon association rules. Generally, these encode certain repeated purchases or co-purchases into simple decision rules, such as, \eg, \mbox{IF \{ pizza, beer \}} \mbox{THEN \{ painkiller \}}, where the then-clause states the expected subsequent purchase. The generation of association rules is commonly based on the Apriori algorithm \cite{Agrawal.1994} or variants thereof. In keeping with this, recent methods build upon extensive domain knowledge, together with manual feature rule engineering, in order to devise tailored means for finding higher-order purchase patterns \cite{Guidotti.2017}. Although intuitive, we argue that there is considerable capacity for improvement, so that a higher prediction performance can be attained: 
\begin{enumerate}[leftmargin=0.5cm]
	\item The proposed prediction rules are all based on exact product matching; \ie, for decision rules, the products ``white wine'' and ``red wine'' are completely different as rules cannot learn an implicit hierarchical structure from the data. However, presumably, the inherent similarity between some products (\ie, substitutes) is also responsible for a similar influence on subsequent purchases. 
	\item Rules are autoregressive in the sense that they can only make inferences from past purchases of a \emph{single} customer without taking into account the knowledge transfer that takes place between customers.
	\item Rule-based approaches fail to provide a measurement for the predictability of the next market basket. However, this is often required by merchants, as such decision support should target only customers with sufficient confidence in the prediction. 
\end{enumerate}

\textbf{Proposed prediction algorithm:}\footnote{Code available from \url{https://github.com/mathiaskraus/MarketBasket}}. In order to address the aforementioned limitations, we propose a novel prediction algorithm that is tailored to market baskets. Specifically, our algorithm allows to identify co-occurrences between shopping histories from different customers. Furthermore, each product is represented by a multi-dimensional (embedded) vector in order to learn similarity structures among the assortment. For this reason, we develop an algorithm that performs similarity matching across all shopping histories, $\mathcal{B}$. Formally, our proposed algorithm is given by a combination of $k$-nearest neighbor and subsequence dynamic time warping~($k$NN-SDTW) that operates on embedded product vectors. This form of sequence-based similarity matching is computed according to the Wasserstein distance. Thereby, we interpret market baskets as probability distributions of products from an assortment. We further develop an efficient lower bound of the Wasserstein distance with theoretical guarantees. These components together allow us to make accurate inferences with regard to future market baskets.  

\textbf{Theoretical properties:} Our algorithmic advances are also manifested in both the computational complexity and lower bounds of the Wasserstein distance. On the one hand, we perform subsequence matching based on the dynamic time warping algorithm, \ie, subsequential dynamic time warping (SDTW), in order to accelerate the computation of distances between shopping histories. In fact, the computational cost of our algorithm SDTW is $O(n \, (m+1))$. This approximation is superior to a straightforward application of dynamic time warping, which attains merely $O(n^2 \, m)$. On the other hand, we propose a tight lower bound for the $p$-Wasserstein distance ($p \geq 1$). The resulting algorithm requires $O(l^2)$ computations as compared to $O(l^3 \, \log{l})$ for the na{\"i}ve computation, where $l$ denotes the number of unique items in the market baskets under consideration.

\textbf{Evaluation:} The effectiveness of our prediction algorithm is evaluated based on three real-world datasets, which vary in size between 9,000 and over 60,000 customers. We also adapt to the specific needs of offline and online retailers by evaluating our prediction algorithms across multiple levels of granularity, namely, aisle-level and product-level. Overall, we find consistent improvements of our learning algorithm over association rules and na{\"i}ve heuristics from practice. While our approach outperforms baseline models by 2.54\,\% on a multi-category dataset comprising of groceries, it multiplies the ratio of correct predictions by a factor of 4.0 for a dataset covering food, office supplies and furniture. 

\textbf{Contributions:} There are only a few studies on predicting market baskets and these are limited to the theoretical shortcomings of association rule learning; hence, we advance the literature in the following directions:
\begin{enumerate}[leftmargin=0.5cm]
	\item We aim to overcome the current limitations in market basket prediction: our novel prediction algorithm can learn hidden structures among products; it can draw upon the complete database of trajectories during prediction time and thus leverages cross-customer knowledge. In contrast to the state-of-the-art, our numerical experiments establish that our algorithm can successfully learn the underlying structure. This is best seen in the fact that we have a several magnitudes higher rate of exact predictions, implying that simple rules are ineffective for encoding the structure behind sequential purchases of market baskets.
	\item This work provides the first combination of dynamic time warping for subsequence matching and the Wasserstein distance. The latter provides a natural similarity score that can relate subsets of different sizes to one another. 
	\item We achieve scalability by deriving a fast variant of subsequence matching based on dynamic time warping with an explicit lower bound, thus giving rise to a tight $p$-Wasserstein bound. 
\end{enumerate}

\section{The Proposed Prediction Algorithm for Market Baskets}

We introduce the following mathematical notation. Given a sequence $B = [ b_1, \ldots, b_m ]$, we then refer to a subsequence of $B$ that spans from the $i_s$-th to the $i_e$-th item via $B[i_s: i_e] \define [b_{i_s}, \ldots, b_{i_e}]$. Analogously, we use $B[i]$ as a short-hand form for $b_i$. 

\subsection{Overview}

Our approach performs along four steps (see Fig.~\ref{fig:framework}): 
\begin{enumerate}[leftmargin=0.5cm]
	\item \textbf{Product embeddings:} First, we build item embeddings, where each item is represented by a multi-dimensional vector with similar items being closer together (in terms of cosine similarity). For instance, ``white wine'' and ``red wine'' are mapped on a representation that is in closer proximity than ``white wine'' and ``apples''. 
	\item \textbf{Wasserstein-based purchasing similarity:} Next, we utilize the Wasserstein distance in order to compute distances between market baskets. This allows us to compute distances over sets with a variable number of items. Later, it reveals another benefit, as it allows us to interpret a market basket as a probability distribution over products from a given assortment. In this sense, the distance between two market baskets is then defined by the minimum cost of turning one probability distribution of products into the other. As we shall see later, we suggest an efficient approximation scheme via a lower bound in order to accelerate the computation and, only if the lower bound is promising, perform an exact computation. 
	\item \textbf{Most similar purchasing histories:}  Based on the Wasserstein distance, we build a $k$-nearest neighbor sequence matching by employing a subsequence dynamic time warping~(\ie, $k$NN-SDTW). This approach computes distances between sequences of market baskets. This further gives a numerical score that describes the predictability of a customer's market basket.\\  
	Our model builds upon a $k$-nearest neighbor algorithm which locates the most similar purchase histories as follows. Given is the purchase history $B_c$ of a customer $c$ for whom we want to predict the next market basket. Mathematically, we then try to find a similar (potentially subsequential) purchase history $B_d[i_s: i_e]$ across all customers $d \in \{ 1, \ldots, k \}$, where $i_s$ and $i_e$ denote the customer-specific start and end of the subsequence, respectively. Given the most similar purchase history, we now have $B_{d}[i_e + 1]$ as a set of potential items that customer $c$ will purchase in the next transaction. 
	\item \textbf{Market basket prediction:} Finally, we can make a prediction of the next market basket. Here we simply choose the next market basket from the most similar purchase histories; however, with a small adaptation. We again compute the Wasserstein distance and, if that exceeds a threshold $\tau$, revert to a fallback prediction. In other words, we leverage the fact that our algorithm knows in which situations the sequence matching is uncertain. If the threshold $\tau$ is exceeded, we predict the top-$n_c$ items from the purchase history $B_c$, where $n_c$ is the number of the items in the previous purchase, \ie, $|b_c^{m_c}|$.
\end{enumerate}
Our approach advances the body of research in the following directions. First, we utilize product vectors to model similarity between products. Second, we measure the distance between market baskets (\ie, sets of product embeddings) based on the Wasserstein distance. Finally, we apply an extension of dynamic time warping in order to locate the most similar customer (\ie, the nearest neighbors). The following sections describe each step in detail. 

\begin{figure}	
	\centering
	\includegraphics[width=.8\linewidth]{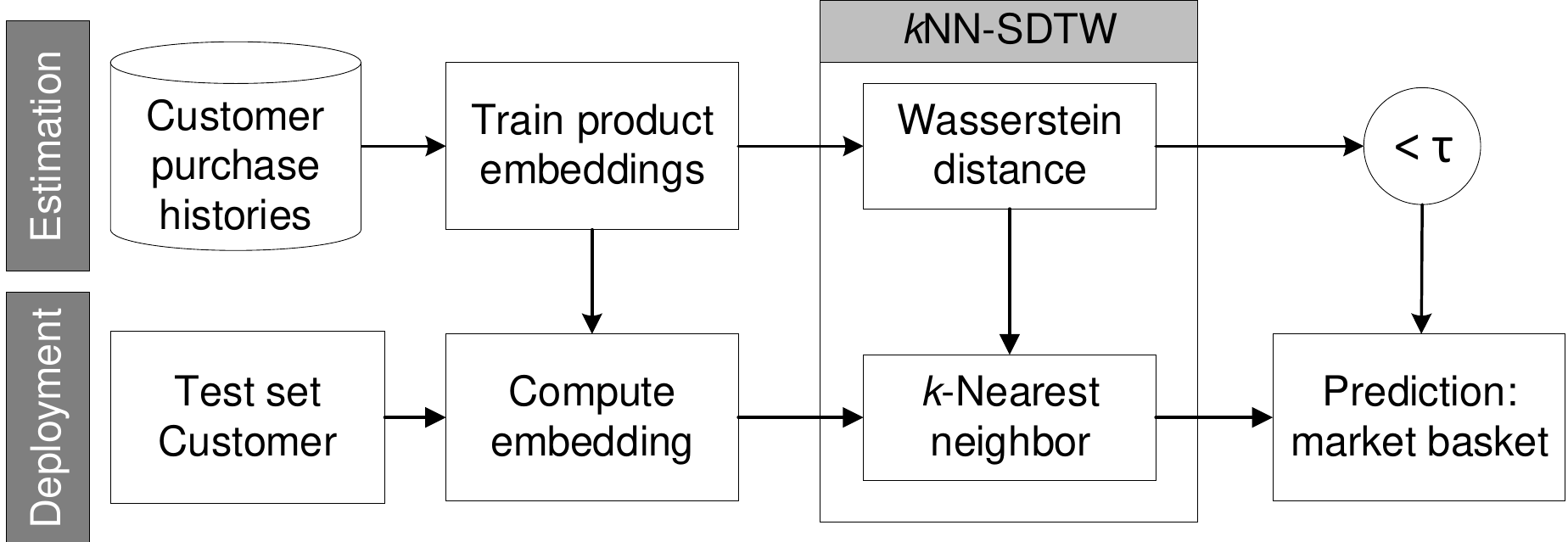}
	\caption{Schematic process for predicting the next market basket. At deployment, we first locate the most similar \mbox{(sub-)sequences} among all purchase histories in the customer base. This yields candidate products based on which we predict next purchase. Depending on the estimated similarity from the Wasserstein distance being below a threshold $\tau$, the candidate vector is used or the process reverts to a simple heuristic instead.}
	\label{fig:framework}
\end{figure}

\subsection{Item Embeddings}

Our model builds upon item embeddings, \ie, multi-dimensional vectors that represent products in a vector space in which similar products are closer to each other (according to cosine similarity). Mathematically, the embeddings translate the high-dimensional sparse vectors that denote products in one-hot fashion into a lower-dimensional dense vector. In order to compute these item embeddings, we adapt neural embeddings which were originally introduced in natural language processing \cite{Mikolov.2013}. The method aims at finding item representations that capture the relation between an item to other items that are present in the same market basket. 

Formally, we maximize the log probability as follows. Given a market basket $b_c^i \subseteq I$, we aim at maximizing
\begin{align}
\sum_{p \in b_c^i} \sum_{\substack{q \in b_c^i \\ q \neq p}} \log \Pr (p \, | \, q),
\end{align}
where $p,q$ refer to products in the market basket $b_c^i \subseteq I$. Here $\Pr (p \, | \, q)$ denotes the softmax function
\begin{align}
\Pr (p \, | \, q) = \frac{\exp(u_p^T v_q)}{\sum_{r \in I} \exp(u_p^T v_r)}
\end{align}
with $u_p \in \mathbb{R}^m$ and $v_q \in \mathbb{R}^m$ being latent vectors corresponding to the target and context representation of product $p$. By averaging $u_p$ and $v_q$, we obtain an $m$-dimensional vector - the embedding vector of product $p$. The choice of the hyperparameter $m$ is detailed in our online appendix.

\subsection{Wasserstein Distance}

Embeddings link to the distance between products, while we now define a distance between market baskets. Note that two market baskets comprise a different number of products (\ie, $m$ and $n$) and their sizes must not necessarily be the same (\ie, $m \neq n$). Hence, we require a distance over variable-size sets.  

\subsubsection{Definition}

Following \cite{Tran.2016}, we adapt the Wasserstein distance to our setting. Let $X = \{x_1, \ldots, x_m\}$ and $Y = \{y_1, \ldots, y_n\}$ refer to two market baskets which are given by a different set of product embeddings. Then the Wasserstein distance of order $p \geq 1$ is defined as
\begin{align}
d_{\mathrm{W}}^{(p)} (X, Y) \define \min_C \left( \sum_{i=1}^{m} \sum_{j=1}^{n} c_{ij} \, d(x_i, y_j)^p \right)^{\frac{1}{p}},
\end{align}
where $d$ is a metric and $C$ is an $m \times n$ transportation matrix with elements $c_{ij}$ that satisfy
\begin{align}
c_{ij} &\geq 0 &&\text{for all } 1 \leq i \leq m \text{, and } 1 \leq j \leq n, \\
\sum_{j=1}^{n} c_{ij} &= \frac{1}{m} &&\text{for all } 1 \leq i \leq m, \label{constraint1} \\
\sum_{i=1}^{m} c_{ij} &= \frac{1}{n} &&\text{for all } 1 \leq i \leq n. \label{constraint2}
\end{align}
Note that $X = \emptyset$ (or $Y = \emptyset$) implies $d_{\mathrm{W}}^{(p)} (X, Y) = \infty$. However, this corresponds to a setting with an empty market basket, which does not appear in our work.

Fig.~\ref{fig:item_embeddings} illustrates the idea underlying the choice of the Wasserstein distance. It is favorable in our scenario of measuring similarity between market baskets, as it can measure similarities between a varying number of products. For instance, Fig.~\ref{fig:item_embeddings} illustrates a market basket B which contains ``Sprite'' and ``Coca Cola Cherry''. The latter is related to ``Coca Cola'' from market basket A, yet it is not included in this market basket. Therefore, both market baskets A and B share not overlapping items and, on top of that, are of different size. However, their Wasserstein distance over product embeddings suggests that they are fairly similar. 

\begin{figure}
	\centering
	\includegraphics[width=.8\linewidth]{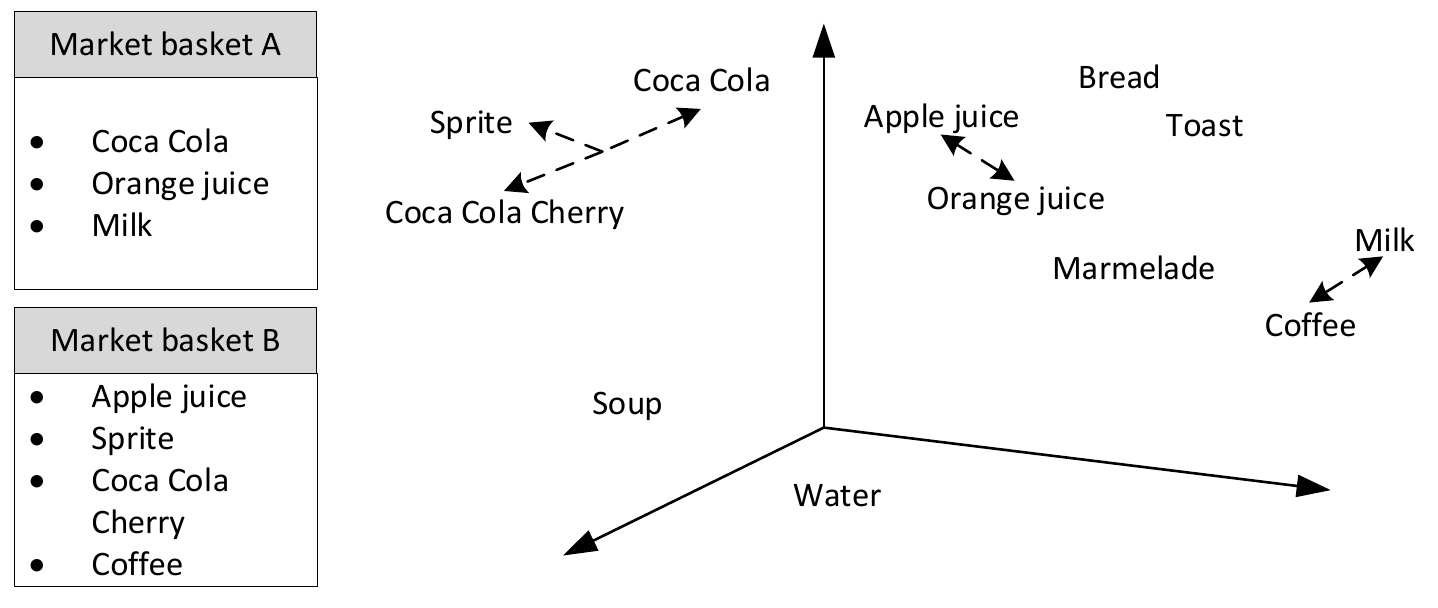}
	\caption{Two market baskets A and B with no overlapping items and different sizes. Common distance measures would return a large distance between the two baskets. In contrast, the Wasserstein distance between A and B suggests that both are fairly similar, as it can learn products that act as substitutes (as indicated by dashed arrows).}
	\label{fig:item_embeddings}
\end{figure}

The best average time complexity of solving the Wasserstein distance is $O(l^3 \log(l))$, where $l$ denotes the number of unique items in the union $X \cup Y$~\cite{Pele.2009}. This time complexity renders applications infeasible that require large-scale comparisons such as ours. In fact, we later need to compute distances among thousands of market baskets and, hence, we suggest an efficient computation scheme in the following.

\subsubsection{Lower Bound of the Wasserstein Distance}

Instead of computing the exact Wasserstein distance for each sample, we first compute a lower bound. This allows us to eliminate candidate sequences when searching for similar purchase histories and, as we shall see later, is responsible for considerably reducing the runtime. Following \cite{Kusner.2015}, we develop a lower bound to $d_{\mathrm{W}}^{(p)} (X, Y)$ but generalize their finding to arbitrary Wasserstein distances in the following lemma. 

\begin{lemma}
	Given $X = \{x_1, \ldots, x_m\}$ and $Y = \{y_1, \ldots, y_n\}$, it holds that 
	\begin{align}
	d_{\mathrm{W}}^{(p)} (X, Y) \geq \sum_{i=1}^{m} \min_{k=1, \ldots, n} d(x_i, y_k)^p \frac{1}{m} \label{lower_bound1}
	\end{align} 
	and 
	\begin{align}
	d_{\mathrm{W}}^{(p)} (X, Y) \geq \sum_{j=1}^{n} \min_{k=1, \ldots, m} d(x_k, y_j)^p \frac{1}{n}, 
	\end{align} 
	where $d$ is a metric.
\end{lemma}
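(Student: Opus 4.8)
The plan is to prove the ``un-rooted'' inequality first --- that the transportation objective $\sum_{i,j} c_{ij}\, d(x_i,y_j)^p$ is bounded below by $\tfrac1m\sum_i \min_{k} d(x_i,y_k)^p$ for \emph{every} feasible coupling $C$ --- and only then recover $d_{\mathrm{W}}^{(p)}(X,Y)$ by minimising over $C$ and taking the $p$-th root. Conceptually, this is the relaxation bound of \cite{Kusner.2015} (originally stated for $p=1$) carried out with the ground cost $d(\cdot,\cdot)^p$ in place of $d(\cdot,\cdot)$: dropping the column-marginal constraint (\ref{constraint2}) enlarges the feasible set and hence can only lower the optimum, and over the relaxed set the cheapest coupling ships all the mass of each $x_i$ onto its nearest neighbour among $y_1,\ldots,y_n$, which costs precisely $\tfrac1m\sum_i \min_{k} d(x_i,y_k)^p$. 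The second inequality is the mirror statement, obtained by instead dropping the row-marginal constraint (\ref{constraint1}).

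\paragraph{Key steps.}
For the lower bound I would observe that one does not even need the closed form of the relaxed problem; a direct estimate suffices. I would fix any feasible $C$, and for each pair $(i,j)$ replace the cost entry by the row minimum, $d(x_i,y_j)^p \ge \min_{k=1,\ldots,n} d(x_i,y_k)^p$; since $c_{ij}\ge 0$ this survives multiplication. Summing over $j$ and substituting the row-marginal constraint (\ref{constraint1}), which states $\sum_{j} c_{ij} = \tfrac1m$, yields $\sum_{j} c_{ij}\, d(x_i,y_j)^p \ge \tfrac1m\,\min_{k} d(x_i,y_k)^p$. Summing over $i$ gives the un-rooted inequality uniformly in $C$, so it survives taking $\min_C$; since $t\mapsto t^{1/p}$ is nondecreasing on $[0,\infty)$ for $p\ge1$ and therefore interchanges with $\min_C$, one arrives at a bound of the form (\ref{lower_bound1}). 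The companion bound follows by the symmetric computation: fix $j$, bound $d(x_i,y_j)^p \ge \min_{k} d(x_k,y_j)^p$, sum over $i$, and use (\ref{constraint2}) in the form $\sum_i c_{ij} = \tfrac1n$.

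\paragraph{Main obstacle.}
There is no serious obstacle: the heart of the argument is a two-line estimate. The points that merit care are (i) invoking the marginal constraints as \emph{equalities}, so that the prefactor comes out as exactly $1/m$ (resp.\ $1/n$) rather than as a weaker bound; and (ii) the handling of the outer exponent $1/p$ when passing from the linear-programming value to $d_{\mathrm{W}}^{(p)}(X,Y)$ --- one should check that the monotone $p$-th root commutes with the minimisation over $C$, and keep in mind that the estimate above literally controls $\big(d_{\mathrm{W}}^{(p)}(X,Y)\big)^p$. Finally, it is worth noting that this direction uses nothing about $d$ beyond its being a nonnegative real-valued function --- neither symmetry nor the triangle inequality is invoked --- so the lemma in fact holds verbatim for an arbitrary nonnegative ground cost in place of $d(\cdot,\cdot)^p$.
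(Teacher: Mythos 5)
Your proposal is correct and follows essentially the same argument as the paper: bound each cost entry $d(x_i,y_j)^p$ by its row (resp.\ column) minimum, use nonnegativity of $c_{ij}$ and the marginal constraint $\sum_j c_{ij}=\tfrac1m$ (resp.\ $\sum_i c_{ij}=\tfrac1n$), and sum; the paper merely wraps this same estimate in an explicit constraint-relaxation, which your direct version shows is unnecessary. You are in fact more careful than the paper on one point: the estimate controls $\bigl(d_{\mathrm{W}}^{(p)}(X,Y)\bigr)^p$, so for $p>1$ the stated bounds should carry a $p$-th root on the right-hand side, an issue the paper's proof glosses over.
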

\begin{proof}
	In the following, we prove the first lower bound, as the second follows analogously. Let us assume a simplified definition of the Wasserstein distance where we remove the constraint given by Eqn.~(\ref{constraint1}). This yields the optimization problem
	\begin{align}
	\label{equ:lower_bound}
	\tilde{d}_{\mathrm{W}}^{(p)} (X, Y) \define \min_{\tilde{C}} \left( \sum_{i=1}^{m} \sum_{j=1}^{n} \tilde{c}_{ij} \, d(x_i, y_j)^p \right)^{\frac{1}{p}}, 
	\end{align}
	where each $\tilde{C}$ is an $m \times n$ transportation matrix, so that its elements $\tilde{c}_{ij}$ satisfy
	\begin{align}
	\tilde{c}_{ij} &\geq 0 &&\text{for } 1 \leq i \leq m, 1 \leq j \leq n, \\
	\sum_{j=1}^{n} \tilde{c}_{ij} &= \frac{1}{m} &&\text{for } 1 \leq i \leq m.
	\end{align}
	Intuitively, this must be a lower bound, as all solutions to the original Wasserstein distance remain a feasible solution despite removing the constraint. 
	
	In the following, we show the solution for $\left( \tilde{d}_{\mathrm{W}}^{(p)}(X, Y) \right) ^p$ which is also the solution for $\tilde{d}_{\mathrm{W}}^{(p)}(X, Y)$. Let $\tilde{C}$ be any feasible matrix for Eqn.~(\ref{equ:lower_bound}) with elements $\tilde{c}_{ij}$, then
	\begin{align}
	& \sum_{i=1}^{m} \sum_{j=1}^{n} \tilde{c}_{ij} \, d(x_i, y_j)^p, \\
	&\geq \sum_{i=1}^{m} \sum_{j=1}^{n} \min_{k=1, \ldots, n} \tilde{c}_{ij} \, d(x_i, y_k)^p , \\
	&= \sum_{i=1}^{m} \min_{k=1, \ldots, n} d(x_i, y_k)^p \sum_{j=1}^{n} \tilde{c}_{ij}, \\
	&= \sum_{i=1}^{m} \min_{k=1, \ldots, n} d(x_i, y_k)^p \frac{1}{m}.
	\end{align}
\end{proof}

\subsubsection{Computation Scheme}

By leveraging the previous lemma, we come up with the following approach for approximating the Wasserstein distance. That is, we compute a lower bound $\mathit{LB}$ and, only if that is promising, compute the exact Wasserstein distance. 

A trivial approach would be based directly on the above lemma and obtain a corresponding lower bound by considering either the first or the second constraint, \ie, $\mathit{LB}_1$ and $\mathit{LB}_2$. For a vector $x_i$, these bound are obtained by computing the nearest neighbor via 
\begin{align}
\mathit{LB}_1 = \argmin_{k=1, \ldots, n} d(x_i, y_k)^p \quad\text{and} \mathit{LB}_2 = \argmin_{k=1, \ldots, m} d(x_k, y_j)^p , 
\end{align}
where the nearest neighbor search in the second is simply reversed. 

By taking the maximum of the two lower bounds, we obtain an even tighter lower bound 
\begin{equation}
\mathit{LB}^\ast = \max \{ \mathit{LB}_1, \mathit{LB}_2 \} .
\end{equation}
If the lower bound $\mathit{LB}^\ast$ between two baskets $X$ and $Y$ exceeds the distance of previous nearest neighbors, we skip computation of the exact Wasserstein distance. Overall, we significantly reduce the computation time in comparison to the original $O(l^3 \log(l))$ for the exact Wasserstein distance, while our nearest neighbor search is superior in the sense that it achieves a time complexity of $O(l^2)$. 

\subsection{Wasserstein-Based Dynamic Time Warping}

We have previously measured the distance between individual market baskets, while we now proceed by extending it to distances between \emph{sequences} of market baskets. Specifically, we now are interested in locating similar purchase histories. For this purpose, we utilize dynamic time warping and review its na{\"i}ve form in the following. Afterwards, we extend it to subsequences, while we finally combine it with a $k$-nearest neighbor procedure. 

Dynamic Time Warping has proven to be an exceptionally powerful distance measuring device for time series \cite{Kate.2016, Varatharajan.2018}. It is an algorithm for measuring similarity between two temporal sequences where events occur at different speeds. This is addressed by ``warping'' sequences non-linearly in the time dimension in order to determine a measure of similarity independent of certain non-linear variations in time. Mathematically, its distance between two sequences is the sum of the matched objects from the sequences after the two sequences have been optimally mapped.

The abovementioned arguments make the dynamic time warping framework a natural choice for measuring similarities between purchasing histories. Customers alike might make repeat purchases at different repeat rates. On top of that, might even exchange the original item against a different product which acts as a substitute. 

\subsubsection{Na{\"i}ve Dynamic Time Warping}

Consider two purchase histories of customer $c$ and $d$, namely, $B_c = [b_c^1, b_c^2, \ldots, b_c^n]$ and $B_d = [b_d^1, b_d^2, \ldots, b_d^m]$. According to dynamic time warping, the distance between them, $d_{\text{DTW}}(B_c, B_d)$, is determined by a matrix $D \in \mathbb{R}^{n \times m}$. For $i=1, \ldots, n$ and $j=1, \ldots, m$, the element $D_{ij}$ of $D$ is computed according to the recursive scheme
\begin{equation}
D_{ij} = d^{(p)}_{\mathrm{W}} (b_c^i, b_d^j) + \min \{ D_{i, j - 1}, D_{i - 1, j}, D_{i - 1, j - 1} \} 
\end{equation}
with
\begin{align}
D_{0,0} &= 0, \\
D_{i,0} &= D_{0,j} = \infty. \\
\end{align}
Evidently, na{\"i}ve dynamic time warping can compute similarities even when the two sequences are related non-linearly. However, it has a particular caveat as both the first and last element of the sequences must be identical matches. This is not beneficial in our case: \eg, a younger customer with short purchase history might map well with only some recent time span of a customer with a much longer history. Hence, we now extend the similarity matching to subsequential cases.

\subsubsection{Subsequential Dynamic Time Warping}

For a purchase history $B_c = [b_c^1, b_c^2, \ldots, b_c^n]$ and a potentially much longer purchase history $B_d = [b_d^1, b_d^2, \ldots, b_d^m]$, one could find a good matching between the sequence $B_c$ and a subsequence $B_d[i_s:i_e]$ which would be sufficient for our approach as we are later only interested in $B_d[i_e + 1]$. 

Fig.~\ref{fig:dtw} illustrates the idea of matching a subsequence of $B_d$. Here $B_c$ matches well with a subsequence of $B_d$ but would have large distance when start and end point of the sequences would be aligned. 

We are now left how subsequential matching can be accomplished. A trivial solution approach would be to consider all possible subsequences of $B_d$, \ie, $B_d[i_s:i_e]$ for $1 \leq i_s \leq i_e \leq m$. However, this approach would result in a time complexity of $O(nm^2)$ that renders applications with large-scale data infeasible. Instead, we adapt the DTW algorithm to identify matching subsequences by utilizing star-padding~\cite{Sakurai.2007}. The idea behind star-padding is to insert a special zero-distance value at the beginning of $B_c$, \ie, it has zero distance to $b_d^i$ for $i=1,2, \ldots, m$. This translates into the following subsequential DTW algorithm: the distance $d_{\text{SDTW}}(B_c, B_d)$ is determined by $\min_i D_{in}$ of matrix $D$, whose entries are computed recursively via
\begin{equation}
D_{ij} = d^{(p)}_{\mathrm{W}} (b_c^i, b_d^j) + \min \{ D_{i, j - 1}, D_{i - 1, j}, D_{i - 1, j - 1} \} 
\end{equation}
for all $i=1, \ldots, n$ and $j=1, \ldots, m$ with
\begin{align}
D_{i, 0} &= D_{0,0} = 0, \quad\text{and}\\
D_{0, j} &= \infty .
\end{align}
Here the use of star-padding~\cite{Sakurai.2007} reduces computational complexity to $O((n+1)m)$ for deriving the minimum subsequence distance.

\begin{figure}
	\centering
	\includegraphics[width=0.5\linewidth]{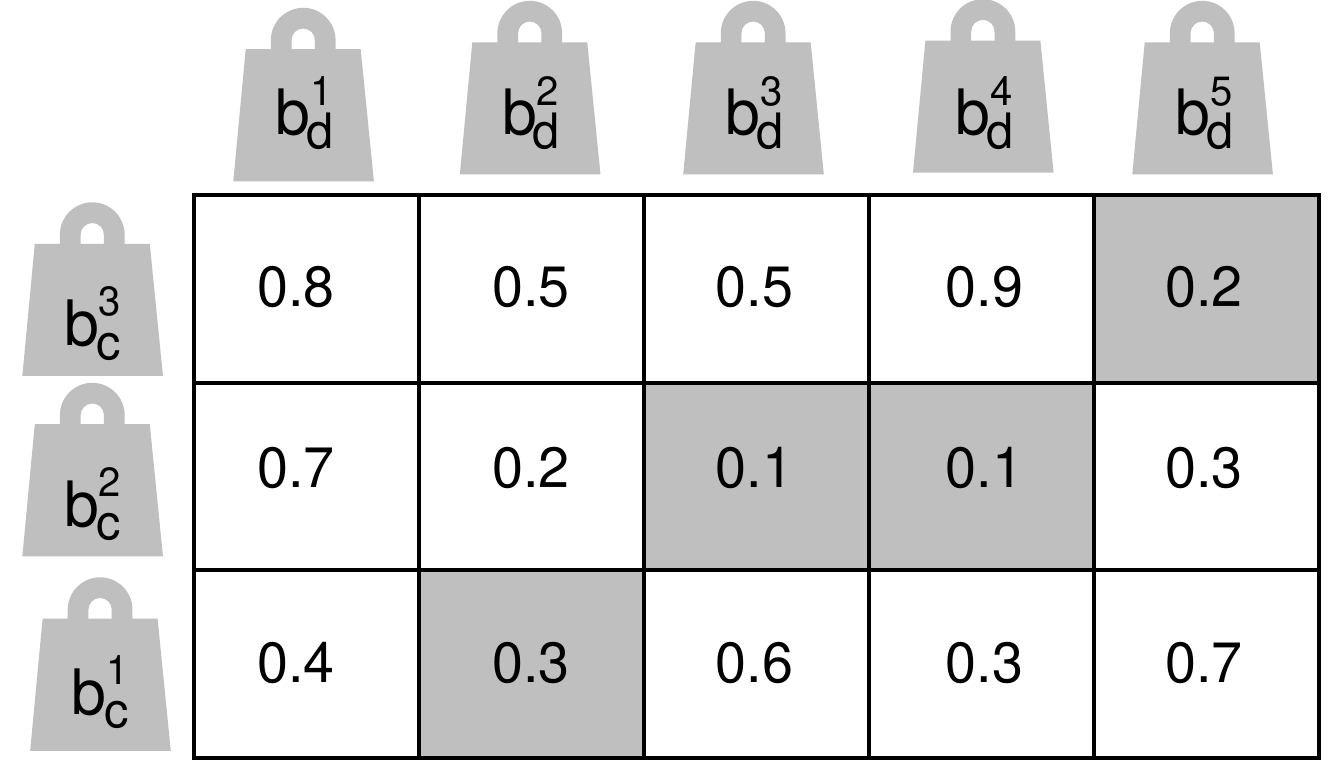}
	\caption{Example how subsequential dynamic time warping computes the distance between a given purchase history for customer $c$, \ie, $B_c = [b_c^{1}, b_c^{2}, b_c^{3}]$, and potential subsequences of the purchase history of customer $d$, $B_d = [b_d^{1}, b_d^{2}, b_d^{3}, b_d^{4}, b_d^{5}]$. The value in cell belonging to row $b_c^{i}$ and column $b_d^{j}$ denotes the Wasserstein distance between the two baskets in our example. The smallest distance is between $B_c$ and $B_d[2:5]$. Here, the market basket $b_c^2$ is fairly similar to both $b_d^3$ and $b_d^4$, possibly because customer $d$ has bought a similar basket twice.}
	\label{fig:dtw}
\end{figure}

\subsection{Prediction}

After having introduced a tool for computing distances between purchase histories, we now proceed by stating how we derive a prediction of the next market basket for a customer. 

\subsubsection{$k$-Nearest Neighbor Matching}

Instead of making predictions from the most similar subsequence, we now suggest a classifier that leverages information from multiple similar subsequences based on $k$-nearest neighbor. In general, the input to the $k$-NN approach consists of one unknown datapoint and the $k$ closest training examples in the feature space. After having identified the $k$ closest samples, it is widespread to infer the prediction by taking the majority class within the set of the $k$ neighbors.  

In order to utilize a nearest neighbor classifier for predicting market baskets, we adapt the method by making the following changes. First, we utilize the distance of the SDTW approach to define how close two samples are. Given a purchase history $B_c = [b_c^{1}, b_c^{2}, \ldots, b_c^{m}]$ as input, we then compute the prediction, \ie, the next market basket for a customer $c$, by considering all purchase histories $\mathcal{B} = \{ B_{1}, B_{2}, \ldots,  B_{k} \}$ for $k$ customers. We solve
\begin{align}
( i^*, i_s^*, i_e^* ) = \argmin_{\substack{1 \leq i_s \leq i_e \\ i=1, \ldots, k}} d_{\text{SDTW}}(B_c, B_{i}[i_s: i_e])
\end{align}
in order to find the closest (and potentially subsequential) purchase histories within the complete customer base. Note that this optimization problem is extended for the desired number of nearest neighbors.  

For $1$-nearest neighbor, the prediction for the next market basket is simply $B_{i^*}[i_e^* + 1]$. For $k > 1$, we take the most common items within the $k$ market baskets, where the size of the predicted basket is determined by the nearest neighbor $B_{i^*}[i_e^* + 1]$.

\subsubsection{Similarity-Aware Prediction}

Our Wasserstein-based approach has the advantage of returning a quantity of how similar the most similar purchase history is. We additionally leverage this in order to discard our predictions based on a look-up of similar purchase histories. Instead, we revert to a fallback prediction in the form of a straightforward heuristic that is often more reliable for such cases. This approach lets us exclude customers that have highly particular or unique habits and where such similarity scoring would otherwise fail. As a remedy, we then apply a simple heuristic for purchasing personal top-$n$ items; cf. baselines later for details. 

Our above switch is based on a simple condition $d^{(p)}_{\mathrm{W}}(B_c, B_d) < \tau$ where $B_d$ is a potential nearest neighbor and $\tau$ denotes a threshold. More precisely, we utilize our similarity-based prediction only as long as the $k$ nearest neighbors have an average distance of below $\tau$. Finally, the threshold $\tau$ presents a hyperparameter that is tuned as detailed in the online appendix. 

\section{Numerical Experiments}

\subsection{Performance Metrics}

For each predicted market basket, we evaluate the agreement of the predicted basket and the real basket using the following metrics:
\begin{itemize}[leftmargin=0.3cm]
\item \textbf{Wasserstein distance:} The Wasserstein distance is based on item embeddings and thus measures the degree to which the prediction is similar while accounting for products that act as substitutes. For instance, a market basket comprising ``red wine'' has a small Wasserstein distance to a market basket containing ``white wine''.
\item \textbf{F1-score:} The F1-score is defined as the harmonic mean of precision and recall and is a common metric for market basket comparison.
\item \textbf{Jaccard coefficient:} The Jaccard coefficient is the ratio of co-occurrences to non-co-occurrences between items of the predicted market basket and items of the true next market basket, \ie, between market basket $b^*$ and $b$. Formally, the Jaccard coefficient is defined by
\begin{equation}
	J = \frac{p}{p + q + r},
\end{equation}
where $p$ is number of items in both $b^*$ and $b$; $q$ is the number of items in $b^*$ and not in $b$; and $r$ the number of items not in $b^*$ but in $b$. 
\end{itemize}
Note that the F1-score and the Jaccard coefficient measure the overlap by counting the exact matches between market baskets, whereas the Wasserstein distance introduces a probabilistic distance.

\subsection{Datasets}

In this work, we experiment with three public datasets:
\begin{enumerate}[leftmargin=0.5cm]
	\item \textbf{Simplified Instacart:} We retrieved the Instacart groceries dataset for analyzing market baskets.\footnote{\url{https://www.kaggle.com/c/instacart-market-basket-analysis/data}} This dataset includes 3,214,874 orders with 49,688 different products grouped into 134 categories (\ie, aisles). We first start with this simpler task where, due to the category-based aggregation, the available assortment becomes considerably smaller and where we expect only small differences over our baselines. We included this analysis intentionally in our paper as it later allows us to discuss strengths and weaknesses of our method.  
	\item \textbf{Product-level Instacart:} We use the same dataset but afterwards study it at the product level. This greatly increases the number of available items and, as we shall see later, immediately renders our baselines inferior. Here we report our results for the items that are among the 500 most frequently purchased products. 
	\item \textbf{Ta-Feng grocery dataset:} The Ta-Feng dataset is a prevalent baseline dataset in marketing analytics that covers products from food, office supplies, and furniture.\footnote{\url{http://www.bigdatalab.ac.cn}} It contains 817,741 transactions that belong to 32,266 users and 23,812 items. Similarly to the previous dataset, we report our results for the items that are among the 500 most frequently purchased products.
\end{enumerate}
Our appendix lists additional summary statistics, as well as our preprocessing steps.

\subsection{Considered Baselines}

This section briefly summarizes our set of baseline models. Our choice was made in line with the status quo in prior literature \cite{Cumby.2004, Rendle.2010, Wang.2015, Yu.2016} as follows:
\begin{itemize}[leftmargin=0.3cm]
\item \textbf{Global top items:} As a simple baseline, we compute the performance of a model that predicts the $n_c$ most frequently purchased items across all customers. Note that $n_c$ depends on the customer.
\item \textbf{Personal top items:} This baseline purchases the top-$n_c$ items from customer $c$. Hence, it adapts to the specific taste of customer $c$, but cannot recognize time variations.
\item \textbf{Repurchase last basket:} The next baseline method predicts a market basket that consists of products bought in the customer's previous purchase, \ie, for a sequence of market baskets $B_c = [b_c^{1}, b_c^{2}, \ldots, b_c^{m}]$ the prediction is given by $b_c^{m + 1} \define b_c^{m}$. In contrast to the other models, this baseline predicts the size of the market as the size of the previous basket and is further capable to recognize simple autoregressive dynamics. 
\item \textbf{Association rules:} A common approach to analyze market baskets are association rules \cite{Agrawal.1994, Guidotti.2017, Guidotti.2018}. In general, association rules are used for investigating products that are often purchased together. As we are interested in a prediction of future products, we adapt the association rule in the following way. For a sequence $B_c = [b_c^{1}, b_c^{2}, \ldots, b_c^{m}]$ of market baskets, we build the tuple of product pairs from the Cartesian product for all consecutive baskets, \ie, 
\begin{equation}
	\mathcal{C}_c = \{(a,b) \text{ for } a,b \text{ in } b_c^{i} \times b_c^{i + 1} \text{ for } 1 \leq i \leq m - 1\}
\end{equation}
Then, we build the union of all sets of Cartesian products $\mathcal{C} = \bigcup_c \mathcal{C}_c$ and compute a matrix $\mathcal{S} \in \mathbb{N}^{n \times n}$, where $\mathcal{S}_{ab}$ denotes the number of times, item $a$ is in market basket $b_c^{i}$ and $b$ is in the subsequent market basket $b_c^{i + 1}$, \ie the support of tuple $(a,b)$ in $\mathcal{C}$. Here $n$ refers to the number of items in our assortment. The prediction of the next market basket of a sequence $B_c = [b_c^{1}, b_c^{2}, \ldots, b_c^{m}]$ is then given by the products that have been bought most often after products from basket $b_c^{m}$ have been bought.
\end{itemize}

\subsection{Prediction Performance}

\subsubsection{Simplified Category-Based Instacart Purchases}

Tbl.~\ref{tbl:aisle_results} lists predictive performance for our models when predicting market baskets on aisle level. Note that this is an artificial example where the complexity of the assortment was reduced in order to better understand the strengths and weaknesses of the different methods. 

Our approach outperforms the former baselines in terms of the F1-score and, for all other metrics, it is fairly on par. As a direct implication, we see that such a small dataset with more repeat purchasing (due to the fact that products are combined into categories) is not favorable for our approach, since it can apparently not leverage the hierarchical structure from the item embeddings.

\begin{table}[H]
	\centering
	\footnotesize
	\begin{tabular}{l ccccc}
		\toprule
		\textbf{Model} & \textbf{Wasserstein} & \textbf{F1-Score} & \textbf{Jaccard} \\
		& \textbf{distance} & & \textbf{coefficient} \\
		\midrule
		Global top items & 6.642 & 0.387 & 0.428 \\
		Personal top items & 5.799 & 0.493 & \bfseries 0.567 \\
		Repurchase last basket & \bfseries 5.699 & 0.483 & 0.465 \\
		Association rules & 7.108 & 0.269 & 0.297 \\
		\midrule
		Our approach & 5.704 & \bfseries 0.497 & \bfseries 0.567 \\
		\bottomrule
	\end{tabular}
	\caption{Performance for prediction of next market basket on a simplified variant of the Instacart dataset, where products were aggregated at aisle level. Accordingly, the complexity of the task is reduced in comparison to the other product-level datasets and, as expected, we observe a setting that is beneficial for the baselines. Best performance in bold.}
	\label{tbl:aisle_results}
\end{table}

\subsubsection{Product-Level Instacart Purchases}

We now proceed to evaluate the model for predicting market baskets on product level. The corresponding results are detailed in~Tbl.~\ref{tbl:product_results}. Among the baseline models, predicting the personal top items of each customer performs best, outperforming the second best baseline by 18.73\,\% in terms of F1-score. When comparing with our simplified dataset from the previous section, we evidently note based on the dataset that a good solution is considerably more complex as the repeat purchase heuristic fails to identify an accurate structure. In contrast, our approach outperforms all baselines: it increases the prediction performance in terms of F1-score by 2.54\,\%. 

\begin{table}[H]
	\centering
	\footnotesize
	\begin{tabular}{l ccc}
		\toprule
		\textbf{Model} & \textbf{Wasserstein} & \textbf{F1-Score} & \textbf{Jaccard} \\
		& \textbf{distance} & & \textbf{coefficient} \\
		\midrule
		Global top items & 11.483 & 0.174 & 0.186 \\
		Personal top items & 8.477 & 0.393 & 0.399 \\
		Repurchase last basket & 9.212 & 0.331 & 0.330 \\
		Association rules & 12.601 & 0.033 & 0.033 \\
		\midrule
		Our approach & \bfseries 8.348 & \bfseries 0.403 & \bfseries 0.407 \\
		\bottomrule
	\end{tabular}
	\caption{Performance for prediction of next market basket on Instacart dataset at product level. Best performance in bold.}
	\label{tbl:product_results}
\end{table}

\subsubsection{Ta-Feng Grocery Dataset}

Tbl.~\ref{tbl:ta_feng} lists the results for the Ta-Feng grocery dataset. For this dataset, all baseline models almost completely fail to find patterns. The strongest baseline model is again attained by choosing the personal top items, yielding an F1-score of 0.053 and a Jaccard coefficient of 0.259. In contrast, our approach benefits from learning more complex dynamics and thus bolsters the performance: it consistently outperforms all baseline models across all metrics. For instance, the F1-score amounts to 0.263, which corresponds to an improvement by almost 400\,\%. 

\begin{table}[H]
	\centering
	\footnotesize
	\begin{tabular}{l ccccc}
		\toprule
		\textbf{Model} & \textbf{Wasserstein} & \textbf{F1-Score} & \textbf{Jaccard} \\
		& \textbf{distance} & & \textbf{coefficient} \\
		\midrule
		Global top items & 33.910 & 0.047 & 0.152 \\
		Personal top items & 35.376 & 0.053 & 0.259 \\
		Repurchase last basket & 35.876 & 0.040 & 0.075 \\
		Association rules & 35.730 & 0.022 & 0.087 \\
		\midrule
		Our approach & \bfseries 29.661 & \bfseries 0.263 & \bfseries 0.550 \\
		\bottomrule
	\end{tabular}
	\caption{Performance for prediction of next market basket on Ta-Feng dataset. Best performance in bold.}
	\label{tbl:ta_feng}
\end{table}

\subsection{Sensitivity Analysis across Product Categories}

Tbl.~\ref{tbl:product_categories} lists prediction performance of our approach across different product categories. It enables us to judge whether product categories with strong customer loyalty and thus less likelihood of substitution effects can be better predicted. Evidently, the prediction of products from the categories ``bakery'' and ``pets'' achieves the highest prediction performance in terms of F1-score. In contrast, ``snacks'' and ``alcohol'' appear the hardest to predict. For instance, purchasing behavior of alcoholic beverages is known to be largely driven by price sensitivity and thus rare repeat purchasing with strong substitution effects. 

\begin{table}[H]
	\centering
	\footnotesize
	\begin{tabular}{l ccc}
		\toprule
		\textbf{Product} & \textbf{Wasserstein} & \textbf{F1-Score} & \textbf{Jaccard} \\
		\textbf{categories} & \textbf{distance} & & \textbf{coefficient} \\
		\midrule
		Bakery & 2.679 & 0.779 & 0.849 \\
		Pets & 1.200 & 0.725 & 0.732 \\
		Pantry & 3.853 & 0.615 & 0.607 \\
		Beverages & 4.804 & 0.576 & 0.583 \\ 
		Dairy eggs & 5.957 & 0.545 & 0.564 \\
		Deli & 5.202 & 0.544 & 0.583 \\
		Babies & 4.641 & 0.516 & 0.522 \\
		Frozen & 5.039 & 0.501 & 0.499 \\
		Household & 3.767 & 0.500 & 0.521 \\
		Breakfast & 4.306 & 0.498 & 0.520 \\
		Produce & 7.540 & 0.485 & 0.484 \\
		Snacks & 5.146 & 0.480 & 0.468 \\
		Alcohol & 3.544 & 0.423 & 0.472 \\
		\bottomrule
	\end{tabular}
	\caption{Sensitivity of the prediction performance by product category (sorted according to F1-score).}
	\label{tbl:product_categories}
\end{table}

\subsection{Computational Performance}

We are now interested in comparing the runtime of the exact Wasserstein distance against our approximation scheme. The following results are based on an Intel Xeon Silver 4114 2.2\,GHz processor with 10 cores and 64\,GB RAM. 

\textbf{Runtime:} We find that computation of the lower bound is roughly twice as fast as the exact scheme when evaluating it on the on product-level Instacart product dataset (as it is the largest one). Here the timings amount to 152 vs. 67.5 microseconds. 

\textbf{Hitrate of lower bound:} We also measured the hitrate of the lower bound which denotes the percentage of purchase histories that could be filtered out because the lower bound exceeded the Wasserstein distance to the closest candidates. For the previous dataset, we obtained a hitrate of 80.14\,\%. Hence, most candidates can be ignored without computing the exact Wasserstein distance. This result highlights that the lower bound is very tight to the actual Wasserstein distance.  

\textbf{Memory:} As $k$NN-SDTW merely utilizes dynamic time warping to compute distances between short sequences representing purchase histories, the memory requirement is neglectable.

\section{Use Cases for Personalized Purchase Predictions}

Currently, there are three main application areas for market basket predictions: 

\textbf{Recommender systems:} The majority of recommender systems currently used in real online shopping environments are based on collaborative filtering methods \cite{He.2017, Linden.2003, Sarwar.2001}. However, these methods usually produce only a list of items, out of which many are substitutes and rarely complementary. If complementary items are desired, our methods could help in offering not only a prediction of a \emph{single} item but make suggestions of item \emph{bundles} that are likely to be purchased together. For instance, a recommender system for cooking must suggest ingredients for a meal, for which the combination gives a nice taste. Hence, we provide a path to better identify complementary items for such settings. 

\textbf{Supply chain optimization:} Online shopping platforms and even brick-and-mortar supplies have recently started to experiment with preemptive delivery, \ie, preparing products for shipping even though the product has not yet been officially sold. However, these approaches are limited to single-item shipments and can not yet handle item bundles. Here our method provides a remedy: it makes suggestions of the actual market basket that is desired and should be shipped (as opposed to a single item thereof). 

\textbf{Assortment optimization:} When optimizing assortments, practitioners oftentimes focus on selecting products that are most profitable but ignore the importance of repeat purchases and particularly co-purchasing patterns. Here our approach can potentially provide new insights that help in obtaining a better understanding, as our approach allows online/offline store managers to simulate purchasing behavior while considering actual baskets. Hence, products that are bought together can be arranged appropriately, \ie, in close proximity. 
\section{Discussion}

\subsection{Relationship to Literature}

Personalized purchase predictions of market baskets are a challenging undertaking, which prohibits a na{\"i}ve application of traditional machine learning frameworks. Recurrent neural networks are effective at learning sequences, yet they return an output vector of a fixed size and can thus not adapt to the variable size of market baskets. Ranking is commonly utilized when predicting a precision@$k$ such as relevant when inferring subset of items, yet it lacks a rule that prescribes a cut-off point in order to determine the size of the market basket. Furthermore, it struggles with extreme size imbalances as in our case ($\left|b_c^{m}\right| \ll n$), merely orders items by purchase probability, and, therefore, does not learn substitution effects among items within a market basket (\eg, milk from brand $B$ might be purchased due to a promotion and thus replaces brand $A$ in the market basket).

\subsection{Potential for Future Work}

As other methods, ours is not free of limitations. In order to better understand the strengths and weaknesses of our method, we performed an additional numerical comparison on a simplified Instacart dataset where products were artificially aggregated across categories (\ie, aisles). Evidently, the small size of the assortment, the dominance of repeat purchases, and the lack of strong dynamics rule out potential benefits from our approach and thus let it appear on par with baselines. A potential explanation stems from the fact that the use of embeddings is almost redundant and that the aggregation cancels out any complex dynamics that could be captured by sequence matching. On the contrary, our proposed method is particularly strong on complex tasks with large assortments, showing that its performance is considerably superior. We also see that it is difficult for our method to capture complex substitution effects (especially, when these are driven by spontaneous purchases or price promotions as common for alcoholic beverages).  

\subsection{Concluding Remarks}

This paper advances the existing literature on sales forecasting which is primarily studying purchase predictions at item or firm level, whereas we propose an innovative approach at basket level for personalized purchase predictions. Here the objective is to identify the \emph{exact} set of items (\ie, the market basket) from a future purchase, while drawing upon past purchase histories as input. Our prediction algorithm entails several unique characteristics: (1)~It is based on dense (embedded) vector representations of items in order to infer hidden structures among them, which eventually aids the identification of co-occurence patterns. (2)~We propose to utilize the Wasserstein distance for this task based on which we measure the similarity between baskets. It implies intuitive interpretations and further quantifies predictability. (3)~We perform cross-customer similarity matching. Thereby, we deliberately search for similar purchasing habits in the complete set of shopping trajectories. This thus facilitates a richer knowledge base and differs from the recurrent or autoregressive models in the literature which merely extrapolate the historic time series of an individual customer without information from other customers. (4)~We develop a fast prediction algorithm based on $k$-nearest neighbor subsequential dynamic time warping, derive its computational complexity, and suggest a lower bound for the Wasserstein distance. 

\bibliographystyle{ACM-Reference-Format}
\bibliography{literature}

\appendix

\section{Research Methods}

\subsection{Summary statistics}
Tbl~\ref{tbl:datasets} lists the three datasets utilized for our experiments along with statistics about the size and complexity of the dataset. 

\begin{table}[H]
	\centering
	\footnotesize
	\begin{tabular}{l ccc}
		\toprule
		\textbf{Dataset} & \textbf{Customers} & \textbf{Baskets} & \textbf{Products} \\
		\midrule
		Simplified Instacard & 65710 & 1634548 & 134 \\	
		Product-level Instacart & 27139 & 603457 & 500 \\
		Ta-Feng grocery dataset & 9451 & 172086 & 500 \\
		\bottomrule
	\end{tabular}
	\caption{Summary statistics of the datasets (after preprocessing).}
	\label{tbl:datasets}
\end{table}

\subsection{Preprocessing}

We only consider customers that have a purchase history of at least ten market baskets, each with at least five items. Further, we split our datasets into training, validation and test set in ratios of 80\,\%, 10\,\%, and 10\,\%, respectively. In line with prior literature, we adopt a customer-level splitting strategy. That is, we split by customers, \ie, each customer is in one of the different sets but not in two. For each customer $c$ in the test set, we utilize the purchase history $B_c = \left[b_1^c, \ldots, b_m^c \right]$ for identifying the customers with the closest purchase histories in our training set. By taking their next purchase as the gold standard, we then predict the next market basket of customer $c$ in order to measure the performance.  

\subsection{Estimation Details}

Our approach contains only very few parameters, which contribute to the robustness of our evaluation. All our hyperparameters are listed in Tbl.~\ref{tbl:parameter_choice}. In fact, our approach only involves three hyperparameters: (i)~the embedding dimensions for products, (ii)~the threshold $\tau$ that decides if to utilize a fallback prediction or our approach, and (iii)~the number of $k$-nearest neighbors that we base our prediction on. The embedding dimension was set to 50 across all experiments as we found this already beneficial. 

State-of-the-art methods~\cite{Cumby.2004, Rendle.2010, Wang.2015, Yu.2016} must fix the size of the predicted basket. Here choices of $n=5$ or $n=10$ are common; however, we found that this impedes the performance given the large assortment in our experiments. Hence, we decided to introduce additional flexibility by making the size of the predicted market basket dynamic to the customer's personal behavior. Hence, all baselines, as well as the fallback heuristic in our proposed model, utilize a customer-specific size $n_{c}$ that amounts to the average basket size of the previous purchases from customer $c$.

\begin{table}[H]
	\centering
	\footnotesize
	\begin{tabular}{l r}
		\toprule
		\textbf{Parameter} & \textbf{Tuning range} \\
		\midrule
		Product embedding dimensions & 50 (fixed) \\
		Threshold $\tau$  & [5,10, \ldots, 30,35] \\
		Number of $k$-nearest neighbors & [1,2,5,10,20]  \\
		\bottomrule
	\end{tabular}
	\caption{Hyperparameters with tuning ranges. Note that the small number of hyperparameters contributes to the robustness of our approach, as well as our evaluation.}
	\label{tbl:parameter_choice}
\end{table}

Our earlier descriptions were based on a Wasserstein distance with an arbitrary $p \geq 1$. For our experiments, we now have to specify a choice: we utilize the $\ell_1-$Wasserstein distance ($p = 1$) for measuring similarities between market baskets.

\end{document}